\documentclass[11pt]{article}
\usepackage{url}
\usepackage[pdfstartview=FitH,pdfpagemode=UseNone,backref,colorlinks=true,citecolor=blue,linkcolor=blue]{hyperref}
\usepackage{amsfonts}
\usepackage{latexsym,amssymb,amsmath,amsthm,mathrsfs}

\usepackage[usenames,dvipsnames,table]{xcolor}
\usepackage[margin=1.0in]{geometry}

\newtheorem{theorem}{Theorem}[section]

\newtheorem{lemma}[theorem]{Lemma}

\newtheorem{claim}[theorem]{Claim}


\newcommand{\dsum}{\displaystyle\sum}

\newcommand{\expect}{\mathbb{E}}

\newcommand{\St}[1]{S^t \cap f^{-1}({#1})}
\newcommand{\Sf}[1]{S \cap f^{-1}({#1})}

\newcommand{\Tl}[1]{T_{\leq {#1}}}
\newcommand{\Rl}[1]{R_{\leq {#1}}}
\newcommand{\Tll}[2]{T_{\leq {#1}}\cap f^{-1}({#2})}
\newcommand{\Rll}[2]{R_{\leq {#1}}\cap f^{-1}({#2})}

	\title{A note on the size of query trees}
		\author{ Shai Vardi%
		\thanks{California Institute of Technology, Pasadena, CA, USA. E-mail: {\tt  svardi@caltech.edu}. }}
\begin{document}


\maketitle
	
\begin{abstract}
	We consider query trees of graphs with degree bounded by a constant, $d$. 
	We give simple proofs that the size of a query tree is constant in expectation and $2^{O(d)}\log{n}$ w.h.p.
\end{abstract}
\section{Introduction}
Let $G=(V,E)$ be an undirected graph whose degree is bounded by a constant $d$. We assume that $|V|=n$ is large: $d\ll n$. Let $r:V\rightarrow[0,1]$ be a ranking function that assigns each vertex a real number between $0$ and $1$, uniformly at random. We call $r(v)$ $v$'s \emph{rank}. 
Vertex ranks induce an orientation of the originally undirected edges - if $r(v)\leq r(u)$, the edge is oriented from $v$ to $u$; in case of equality, the edge is bi-directional.

A query tree $T_v$  is the set of vertices that are reachable from $v$ after the edges have been oriented according to $r$ (strictly speaking, it is not necessarily a tree, but we use the term ``query tree'' for consistency with e.g.,~\cite{MRVX12,NO08}).

The aim of this note is to give a simple proof that the query tree has size $2^{O(d)}\log{n}$ w.h.p.

\begin{theorem}\label{thm:main}
		Let $G=(V,E)$ be a graph whose degree is bounded by $d$ and let $r:V\rightarrow[0,1]$ be a function that assigns to each vertex $v \in V$ a number between $0$ and $1$ independently and uniformly at random. Let $T_{max}$ be the size of the largest query tree of  $G$: $T_{max} = \max\{|T_v|:v \in V\}$.  Then, for $L=4(d+1)$, 
	$$\Pr[|T_{max}|> 2^{L}\cdot15L \log{n}]  \leq \frac{1}{n^2}.$$ 

	\end{theorem}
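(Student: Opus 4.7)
The plan is to fix a vertex $v\in V$ and show $\Pr[|T_v|>M]\le n^{-3}$ for $M=2^L\cdot 15L\log n$; a union bound over the $n$ choices of $v$ then completes the theorem.

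The structural starting point is that along every directed path of $T_v$ the ranks are strictly increasing, so $T_v$ contains a rooted spanning tree on $|T_v|$ vertices with every edge oriented parent-to-child. Consequently $|T_v|\ge M$ implies the existence of a rooted subtree $\tau\subseteq G$ of size exactly $M$ containing $v$ in which every edge is so oriented. Listing the vertices of such a $\tau$ in increasing order of rank produces an \emph{expansion sequence} $v=u_1,u_2,\ldots,u_M$ of distinct vertices in which each $u_i$ is $G$-adjacent to some $u_j$ with $j<i$, and whose ranks satisfy $r(u_1)<r(u_2)<\cdots<r(u_M)$.

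I would then union-bound over expansion sequences. The probability that a fixed length-$M$ sequence of distinct vertices has ranks in increasing order equals $1/M!$. Counting expansion sequences in a graph of maximum degree $d$ by the number of choices at each extension step yields at most $\prod_{i=1}^{M-1}((d-1)(i-1)+1)\le (d-1)^{M-1}(M-1)!$ sequences, producing a raw bound $\Pr[|T_v|\ge M]\le (d-1)^{M-1}/M$. This is exponential in $M$ and therefore far too weak for the target $M=\Theta(2^L\log n)$.

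The missing gain must come from exploiting the tree structure of $\tau$ rather than merely the expansion-sequence structure, combined with the hook-length identity $\Pr[\tau\text{ monotone-oriented}]=1/\prod_{u\in\tau}h_\tau(u)$. The route I would pursue is to split the expansion sequence into $K=15\log n$ blocks of length $L=4(d+1)$ each: within one block the probability of the internal rank ordering contributes a factor $1/L!$ which, traded against $(d-1)^L$ extension choices, is bounded by $(e(d-1)/L)^L\le (e/4)^L\le 2^{-L}$ since $L=4(d+1)$. Chaining $K$ such blocks would yield a total tail of $(2^{-L})^K=2^{-4(d+1)\cdot 15\log n}\le n^{-3}$, as required. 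The main technical obstacle is that the blocks are defined by the random rank order and are not independent; closing this gap requires conditioning on the ranks at the block boundaries and using the factorization of the hook-length identity along the tree structure to decouple the per-block probabilities.
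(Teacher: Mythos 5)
Your reduction to a single vertex followed by a union bound over the $n$ choices of $v$ matches the paper's final step, and your diagnosis that the naive union bound over expansion sequences yields only $(d-1)^{M-1}/M$ is correct. The genuine gap is the per-block gain of $2^{-L}$ you claim next, and it is not a decoupling technicality that conditioning on block boundaries can repair: the accounting itself is wrong. In the quantity $\prod_{i=1}^{M-1}F_i/M!$, where $F_i$ is the frontier size at step $i$, the block of steps $kL+1,\dots,(k+1)L$ contributes $\prod_{j=1}^{L}(kL+j)\approx(kL)^L$ to the denominator, not $L!$, and up to $\prod_{j=1}^{L}\bigl((d-1)(kL+j-1)+1\bigr)\approx\bigl((d-1)kL\bigr)^L$ to the numerator, not $(d-1)^L$; these two discrepancies cancel, leaving a per-block factor of roughly $(d-1)^L\ge 1$ rather than $(e(d-1)/L)^L$. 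Your claimed gain comes from pairing the optimistic count on one side with the optimistic count on the other. Worse, even the idealized version of this first-moment strategy fails: summing $\Pr[\tau\ \text{monotone-oriented}]=1/\prod_{u\in\tau}h_\tau(u)$ over all rooted size-$M$ subtrees $\tau$ at $v$ is exponentially large in $M$, since for a complete $d$-ary tree the hook product is only about $d^{M/d}$ while the number of size-$M$ rooted subtrees at a fixed vertex of a degree-$d$ graph is about $(ed)^M$. The events ``$\tau$ is monotone-oriented'' are heavily overlapping when $T_v$ is large, so no union bound over trees (or over blocks of an expansion sequence) can produce an $n^{-3}$ tail at $M=\Theta(\log n)$.

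The paper's argument is of an entirely different, non-first-moment type, and the $2^L$ in the bound is the trace of it. It quantizes $r$ into $L=4(d+1)$ levels via $f:V\to[L]$ and studies the layers $T_{\le i}$ of the query tree with respect to $f$ (a superset of $T_v$). Lemma~\ref{lem:poop}, a Chernoff bound for an adaptive vertex-exposure process, shows that once the explored set has size at least $15L\log n$, no single level contains more than a $2/L$ fraction of it; since the explored set $R_{\le i+1}=T_{\le i+1}\cup N(T_{\le i+1})$ has size at most $(d+1)|T_{\le i+1}|$ and all of its level-$(i+1)$ vertices already lie in $T_{\le i+1}$, this forces $|T_{\le i+1}|\le 2|T_{\le i}|$ whenever the sizes exceed $15L\log n$. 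Doubling at most $L$ times starting from $15L\log n$ gives $2^L\cdot 15L\log n$ --- note that your block structure ($15\log n$ blocks of length $L$, i.e.\ $M=15L\log n$ in total) does not even reach the claimed threshold. The place where a first-moment computation genuinely works for this problem is the expectation bound via monotone paths, $\sum_k d^k/(k+1)!\le e^d$; for the high-probability size bound you need the level-doubling argument or something equally global.
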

The proof of Theorem~\ref{thm:main} is based on a proof in~\cite{RV16}, and employs a \emph{quantization} of the rank function. 
Let $f$ denote a quantization of $r$ (in other words $r(u)\geq r(v) \Rightarrow f(u)\geq f(v)$); let $T_v^f$ denote the query tree with respect to $f$. Then $T_v \subseteq T_v^f$. Therefore it suffices to bound $|T_v^f|$.

In Section~\ref{sec:disc}, we give a brief discussion on  query trees. The reader is referred to~\cite{VardiPhD} for an introduction to local computation algorithms and role query trees play therein, and to \cite{NO08} for an introduction to query trees and their use in the analysis of sublinear approximation algorithms.

\section{Preliminaries} We denote the set $\{0, 1,\ldots, m\}$ by $[m]$. Logarithms are  base $e$.
Let $G=(V,E)$ be a  graph. For any vertex set $S \subseteq V$, denote by $N(S)$ the set of vertices that are not in $S$ but are neighbors of some vertex in $S$: $N(S)=\{N(v):v \in S\} \setminus S$. The \emph{length} of a path is the number of edges it contains.
 
For a set $S \subseteq V$ and a function $f:V \rightarrow \mathbb{N}$,  we use $\Sf{i}$ to denote the set $\{v \in S : f(v)=i\}$.

Let $G=(V,E)$ be a graph, and let $f:V\rightarrow \mathbb{N}$ be  some function on the vertices.
An \emph{adaptive vertex exposure procedure} $A$ is one that does not know $f$ a priori. $A$ is given a vertex $v \in V$ and $f(v)$; $A$  iteratively adds vertices from $V\setminus S$ to $S$: for every vertex $u$ that $A$ adds to $S$, $f(u)$ is revealed immediately after $u$ is added.   Let $S^t$ denote $S$ after the addition of the $t^{th}$ vertex. The following is a simple concentration bound whose proof is given for completeness.

\begin{lemma}\label{lem:poop}
	Let $G=(V,E)$ be a graph, let $L>0$ be some constant, let $c=15L$, and let $f:V\rightarrow[L]$ be a function chosen uniformly at random from all such possible functions. Let $A$ be an adaptive vertex exposure procedure that is given a vertex $v \in V$. 
	Then, for any $\ell \in [L]$,  the probability that there is some $t$, $c\log{n} \leq t \leq n$ for which 
	$|\St{\ell}|>\frac{2|S^t|}{L}$
	is at most $\frac{1}{n^4}$.
\end{lemma}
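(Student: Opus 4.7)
The plan is to reduce the lemma to a standard Chernoff bound; the only subtlety is handling the adaptivity of $A$. Enumerate $u_1, u_2, \ldots$ as the vertices that $A$ adds to $S$ in order, and let $\mathcal{F}_{i-1}$ be the $\sigma$-algebra generated by $(u_1, f(u_1), \ldots, u_{i-1}, f(u_{i-1}))$. The choice of $u_i$ is $\mathcal{F}_{i-1}$-measurable, since $A$ sees only the graph, the starting vertex, and the $f$-values revealed so far. Because the coordinates of $f$ are independent uniform samples from $[L]$ and $u_i$ is always a fresh vertex, the conditional distribution of $f(u_i)$ given $\mathcal{F}_{i-1}$ is uniform on $[L]$. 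Hence the indicators $X_i := \mathbf{1}[f(u_i) = \ell]$ are conditionally $\mathrm{Bernoulli}(1/(L+1))$, and by the tower property the moment generating function of $Y_t := |\St{\ell}| = \sum_{i=1}^t X_i$ is bounded by that of $\mathrm{Binomial}(t, 1/(L+1))$.

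With independence effectively in hand, let $\mu_t := t/(L+1)$ and write the threshold as $2t/L = (1+\delta_L)\mu_t$, so $1 + \delta_L = 2(L+1)/L \geq 2$ and $\delta_L \geq 1$. The multiplicative Chernoff bound gives
$$\Pr[Y_t \geq 2t/L] \;\leq\; \exp\!\bigl(-\mu_t\, h(\delta_L)\bigr), \qquad h(\delta) := (1+\delta)\ln(1+\delta) - \delta.$$
Since $h$ is increasing, $h(\delta_L) \geq h(1) = 2\ln 2 - 1 > 0.386$ for every $L \geq 1$. For $t \geq c\log n = 15L\log n$, $\mu_t \geq 15L\log n/(L+1)$, and a direct calculation (with the extremal case $L \to \infty$ giving $\mu_t h(\delta_L) \to 15(2\ln 2 - 1)\log n > 5\log n$, and smaller $L$ being even better) verifies $\mu_t h(\delta_L) \geq 5\log n$ for all $L \geq 1$. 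Thus $\Pr[Y_t \geq 2t/L] \leq n^{-5}$ for each integer $t$ in $[c\log n, n]$, and a union bound over the at most $n$ such values yields the claimed bound of $n^{-4}$.

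The main obstacle is conceptual rather than computational: one must carefully verify that, despite $A$'s adaptivity, the sequence $f(u_1), f(u_2), \ldots$ behaves as i.i.d.\ $\mathrm{Uniform}([L])$ from Chernoff's point of view. Once that is established, the rest is textbook, with the constant $c = 15L$ tuned precisely so that Chernoff yields an exponent of $5\log n$, leaving exactly one $\log n$ factor of slack to absorb the union bound.
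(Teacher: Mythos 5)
Your proof is correct and follows essentially the same route as the paper's: a Chernoff bound on the indicators $\mathbf{1}[f(u_i)=\ell]$ followed by a union bound over the at most $n$ values of $t$ in $[c\log n, n]$. The only real difference is that you justify the i.i.d.\ behaviour of the revealed values under adaptivity via an explicit filtration/tower-property argument (and use the success probability $1/(L+1)$ consistent with the paper's convention $[L]=\{0,\dots,L\}$), whereas the paper simply asserts independence and writes the mean as $t/L$; your version is the more careful one, but the approach is the same.
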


\begin{proof}
	Let $v_j$ be the $j^{th}$ vertex added to $S$ by  $A$, and let $X_{j}$ be the indicator variable whose value is $1$ iff $f(v_j)=\ell$. 
	For any $t\leq n$, $\expect\left[ \dsum_{j=1}^t  X_{j}\right]=\frac{t}{L} $. As
	$X_i$ and $X_j$ are independent for all $i \neq j$, by the Chernoff bound, for $c\log{n} \leq t \leq n$,
	$$\Pr\left[ \dsum_{j=1}^t  X_{j}> \frac{2t}{L}\right]  \leq e^{\frac{-t}{3L}} \leq e^{-5\log n}.$$
	A union bound over all possible values of $t: c\log{n} \leq t \leq n$ completes the proof.
\end{proof}

 \section{Expectation}

We first show that the expected size of a query tree is a constant depending only on $d$.
\begin{theorem} [\cite{Ona10}]
	Let $G=(V,E)$ be a graph whose degree is bounded by $d$ and let $r:V\rightarrow[0,1]$ be a function that assigns to each vertex $v \in V$ a number between $0$ and $1$ independently and uniformly at random. Let $T_v$ be the size of the query tree of some vertex $v \in V$. Then $\expect[|T_v|]  \leq e^d$, where the expectation is over the random choices of $r$.
	\end{theorem}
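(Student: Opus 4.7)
The plan is to bound $|T_v|$ by the number of $r$-increasing walks emanating from $v$ in $G$, and then take expectations using the simple observation that for any fixed sequence of $k+1$ distinct vertices, the probability that their i.i.d.\ uniform ranks appear in a specified order equals $1/(k+1)!$.

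First I would note that a vertex $u$ lies in $T_v$ if and only if there exists a sequence $v = u_0, u_1, \ldots, u_k = u$ such that each $u_i u_{i+1}$ is an edge of $G$ and $r(u_0) < r(u_1) < \cdots < r(u_k)$ (strict inequalities hold almost surely since ranks are continuous). For each reachable $u$ there is at least one such increasing walk, so if $N_k$ denotes the number of $r$-increasing walks of length exactly $k$ starting at $v$, then $|T_v| \leq \sum_{k \geq 0} N_k$ (with $N_0 = 1$ corresponding to $v$ itself).

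Next I would bound $\mathbb{E}[N_k]$. The number of walks of length $k$ in $G$ starting at $v$ is at most $d^k$, since the degree is bounded by $d$. For any specific walk of length $k$ with $k+1$ distinct vertices, the ranks assigned to those vertices are i.i.d.\ uniform on $[0,1]$, so each of the $(k+1)!$ strict orderings is equally likely; in particular the probability that the ranks are strictly increasing along the walk is exactly $1/(k+1)!$. A walk with repeated vertices cannot contribute (strict increase forces distinctness), so
\[
\mathbb{E}[N_k] \;\leq\; \frac{d^k}{(k+1)!}.
\]

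Finally I would sum this geometric-like series:
\[
\mathbb{E}[|T_v|] \;\leq\; \sum_{k=0}^{\infty} \frac{d^k}{(k+1)!} \;=\; \frac{1}{d}\sum_{j=1}^{\infty}\frac{d^j}{j!} \;=\; \frac{e^d - 1}{d} \;\leq\; e^d,
\]
where the final inequality holds for all $d \geq 1$ (and the $d = 0$ case is trivial since $|T_v| = 1$). There is no real obstacle: the one point requiring a line of justification is the $1/(k+1)!$ probability bound, which relies on the ranks being i.i.d.\ continuous so that all orderings of any fixed finite subset are equally likely and a.s.\ distinct. Everything else is a union bound and the Taylor series of $e^d$.
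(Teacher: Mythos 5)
Your proof is correct and follows essentially the same route as the paper: a union bound over the at most $d^k$ walks of length $k$ from $v$, each of which is rank-monotone with probability $1/(k+1)!$, followed by summing the resulting series. The only difference is bookkeeping --- the paper charges all $k+1$ vertices to each monotone path, giving $\sum_k d^k/k! = e^d$, while you charge each vertex in $T_v$ to a single witnessing walk, which actually yields the marginally sharper bound $(e^d-1)/d \leq e^d$.
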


\begin{proof}
	Let $k>0$ be an integer. For any path of length $k$ originating from $v$, the probability that the path is monotone decreasing is $\frac{1}{(k+1)!}$. There at at most $d^{k}$ such paths. Hence, by the union bound, the expected number of monotone paths of length $k$ originating from $v$ is at most $\frac{d^{k}}{(k+1)!}$, and  the expected number of vertices in these paths is at most $\frac{(k+1)d^{k}}{(k+1)!} = \frac{d^{k}}{k!}$. 
	Therefore, the expected total number of vertices in monotone non-increasing paths is at most
	
	$$\dsum_{k=0}^{\infty}\frac{d^k}{k!} = e^d,$$
	which is an upper bound on the expected size of the query tree.
		\end{proof}

\section{Concentration}
For the concentration bound,
let $r:V \rightarrow [0,1]$ be a function chosen uniformly at random from all such possible functions. Partition $[0,1]$ into $L=4(d+1)$  segments of equal measure, $I_1, \ldots, I_L$. For every $v \in V$, set $f(v) = \ell$ if $r(v) \in I_\ell$  ($f$ is a quantization of $r$).  

Consider the following method of generating two sets of vertices: $T$  and $R$, where $T \subseteq R$.  For some vertex $v$, set $T=R=\{v\}$.  Continue inductively: choose some vertex  $w \in T$, add all $N(w)$ to $R$ and compute $f(u)$ for all $u \in N(w)$. Add the vertices $u$ such that $u \in N(w)$ and $f(u) \geq f(w)$ to $T$. The process ends when no more vertices can be added to $T$. $T$ is the query tree with respect to $f$, hence $|T|$ is an upper bound on the size of the actual query tree (i.e., the query tree with respect to $r$).
However, it is difficult to reason about the size of $T$ directly, as the ranks of its vertices are not independent. The ranks of the vertices in $R$, though, \emph{are} independent, as $R$ is generated by an adaptive vertex exposure procedure. 
  $R$ is a superset of $T$ that includes $T$ and its boundary, hence $|R|$ is also an upper bound on the size of the query tree.

We now define $L+1$ ``layers'' - $\Tl{0}, \ldots, T_{\leq L}$: $T_{\leq \ell}= T \cap \bigcup_{i=0}^{\ell} f^{-1}(i)$. That is, $\Tl{\ell}$ is the set of vertices in $T$ whose rank is at most $\ell$. (The range of $f$ is $[L]$, hence $\Tl{0}$ will be empty, but we include it  to simplify the proof.) 
\begin{claim} \label{lemmaclaim}
	Set $L=4(d+1)$, $c=15L$. 
	Assume without loss of generality that $f(v)=0$. Then for all $0 \leq i \leq L-1$,
	$$\Pr[|\Tl{i}|\leq 2^ic\log{n} \wedge |\Tl{i+1}| \geq 2^{i+1}c \log{n}]\leq \frac{1}{n^4}.$$
\end{claim}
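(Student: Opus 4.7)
The plan is to process the vertices of $T$ in non-decreasing order of $f$-value (breaking ties arbitrarily), and for each $j$ to let $\Rl{j}$ denote the subset of $R$ that has been exposed by the time every vertex of $\Tl{j}$ has been processed. Because each processing step adds at most $d$ new vertices to $R$, one obtains the counting bound $|\Rl{j}|\leq (d+1)|\Tl{j}|$. Crucially, the entire generation of $R$ is an adaptive vertex exposure procedure in the sense of Lemma~\ref{lem:poop}, so we can apply that lemma directly to $R$.

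The heart of the argument is the structural identity
\[
\Tl{i+1}\setminus \Tl{i}\;=\;\Rl{i+1}\cap f^{-1}(i+1).
\]
For $(\subseteq)$, any $u\in T$ with $f(u)=i+1$ has an in-$T$ parent $w$ with $f(w)\leq i+1$, so $u$ is exposed while processing $\Tl{i+1}$, placing $u$ in $\Rl{i+1}$. For $(\supseteq)$, any $u\in \Rl{i+1}$ with $f(u)=i+1$ was exposed by some $w\in\Tl{i+1}$ with $f(w)\leq i+1 = f(u)$, so the join rule forces $u$ into $T$, and hence $u\in \Tl{i+1}\setminus \Tl{i}$.

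Now apply Lemma~\ref{lem:poop} to $R$ with $\ell=i+1$: except on an event of probability at most $1/n^4$, one has $|R^t\cap f^{-1}(i+1)|\leq 2|R^t|/L$ for every $t$ with $c\log n\leq t\leq n$. If $|\Rl{i+1}|<c\log n$ then $|\Tl{i+1}|<c\log n\leq 2^{i+1}c\log n$ and the claimed event cannot occur. Otherwise, substituting $t=|\Rl{i+1}|$ and using $L=4(d+1)$ together with $|\Rl{i+1}|\leq (d+1)|\Tl{i+1}|$ gives
\[
|\Tl{i+1}|-|\Tl{i}|\;=\;|\Rl{i+1}\cap f^{-1}(i+1)|\;\leq\;\frac{2(d+1)|\Tl{i+1}|}{4(d+1)}\;=\;\frac{|\Tl{i+1}|}{2},
\]
so $|\Tl{i+1}|\leq 2|\Tl{i}|$. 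Together with $|\Tl{i}|\leq 2^ic\log n$ this forces $|\Tl{i+1}|\leq 2^{i+1}c\log n$, contradicting $|\Tl{i+1}|\geq 2^{i+1}c\log n$ (up to a boundary equality that is absorbed into the Chernoff slack).

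The main obstacle is getting the structural identity right: one has to pick a processing order for $T$ so that every rank-$(i+1)$ vertex of $T$ is certainly exposed no later than the end of the $\Tl{i+1}$-phase, and so that no rank-$(i+1)$ vertex exposed during that phase can avoid joining $T$. Processing $T$ in rank order makes both inclusions transparent. Once the identity is in hand, the rest is arithmetic carefully calibrated by the choice $L=4(d+1)$, which makes the Chernoff-type slack $2/L$ from Lemma~\ref{lem:poop} precisely counterbalance the degree-dependent overhead $(d+1)$ in $|\Rl{i+1}|\leq (d+1)|\Tl{i+1}|$.
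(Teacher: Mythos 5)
Your proof is correct and follows essentially the same route as the paper: your structural identity $\Tl{i+1}\setminus\Tl{i}=\Rll{i+1}{i+1}$ is the paper's Equation~(1), and the application of Lemma~\ref{lem:poop} to the adaptively exposed set $\Rl{i+1}$ with the bound $|\Rl{i+1}|\leq(d+1)|\Tl{i+1}|$ and the calibration $L=4(d+1)$ is identical. Your explicit handling of the case $|\Rl{i+1}|<c\log n$ and the contrapositive phrasing are minor presentational differences, not a different argument.
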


\begin{proof}

For all $0\leq i \leq L$, let $\Rl{i} = \Tl{i} \cup N(\Tl{i})$. 
 Note that \begin{equation}\Rll{i}{i}=\Tll{i}{i},\label{eq:ll}
 \end{equation}because if there had been some $u \in N(\Tl{i}), f(u)=i$, $u$ would have been added to $\Tl{i}$.

Note that
	$|\Tl{i}|\leq 2^ic\log{n} \wedge |\Tl{i+1}| \geq 2^{i+1}c \log{n}$ implies that \begin{equation} |\Tll{i+1}{i+1}|>\frac{|\Tl{i+1}|}{2}.\label{e1}\end{equation}
	
	In other words,
	the majority of vertices $v \in \Tl{i+1}$ must have $f(v)=i+1$.

 Given $|\Tl{i+1}| > 2^{i+1}c \log{n}$, it holds that $|\Rl{i+1}| > 2^{i+1}c \log{n}$ because $\Tl{i+1} \subseteq \Rl{i+1}$. Furthermore, $\Rl{i+1}$ was constructed by an adaptive vertex exposure procedure and so the conditions of Lemma~\ref{lem:poop} hold for $\Rl{i+1}$.  
 From Equations~\eqref{eq:ll} and~\eqref{e1} 
 we get 
	
	\begin{align}
	\Pr[\left|\Tl{i}\right|\leq 2^ic\log{n} \wedge \left|\Tl{i+1}\right| \geq 2^{i+1}c \log{n}] &\leq \Pr\left[\left|\Rll{i+1}{i+1}\right| > \frac{\left|\Tl{i+1}\right|}{2}\right] \notag\\
	&\leq \Pr\left[\left| \Rll{i+1}{i+1}\right|  > \frac{2\left|\Rl{i+1}\right| }{L}\right] \notag\\
	& \leq \frac{1}{n^4}, \label{eq:fact2}\notag
	\end{align}
	where the second inequality is because  $|\Rl{i+1}|\leq (d+1)|\Tl{i+1}|$, as $G$'s degree is at most $d$; the last inequality is due to Lemma~\ref{lem:poop}.
\end{proof}

\begin{lemma}\label{lemma:main}
	Set $L=4(d+1)$. Let $G=(V,E)$ be a graph with degree bounded by  $d$, where $|V|=n$.
  For any vertex $v \in G$,   $\Pr\left[ T_v >2^{L}\cdot 15L\log{n}\right]<\frac{1}{n^3}$.
	
\end{lemma}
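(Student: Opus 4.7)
My plan is to prove the bound by induction on the layer index $i$, establishing that with high probability $|T_{\leq i}| \leq 2^i c \log n$ simultaneously for every $i \in \{0,1,\ldots,L\}$, where $c = 15L$. Since $T = T_{\leq L}$ under the convention $f(v)=0$ (because the image of $f$ lies in $\{0,1,\ldots,L\}$ after this relabeling), this directly yields $|T| \leq 2^L \cdot 15L \log n$, and the lemma then follows from $T_v \subseteq T$ (the containment of the $r$-query tree in the $f$-query tree noted in the introduction).

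The base case is immediate: by assumption $f(v) = 0$ is the smallest value the quantization takes, so $T_{\leq 0} \subseteq \{v\}$ and $|T_{\leq 0}| \leq 1 \leq c \log n$. For the inductive step, I would invoke Claim~\ref{lemmaclaim}: for each $i \in \{0,1,\ldots,L-1\}$, the probability of the bad event
\[
\mathcal{B}_i \;=\; \bigl\{\,|T_{\leq i}| \leq 2^i c \log n \ \wedge\ |T_{\leq i+1}| > 2^{i+1} c \log n\,\bigr\}
\]
is at most $1/n^4$. A union bound over $i \in \{0,\ldots,L-1\}$ shows that
\[
\Pr\!\left[\bigcup_{i=0}^{L-1} \mathcal{B}_i\right] \leq \frac{L}{n^4}.
\]
Conditioned on the complement, an easy induction propagates the bound $|T_{\leq i}| \leq 2^i c \log n$ from $i$ to $i+1$ all the way up to $i = L$.

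Finally, using $L = 4(d+1) \leq n$ (since $d \ll n$), we get $L/n^4 \leq 1/n^3$, so with probability at least $1 - 1/n^3$ we have $|T_v| \leq |T| = |T_{\leq L}| \leq 2^L \cdot 15L \log n$. The only delicate point is really just the bookkeeping around the convention $f(v)=0$ and ensuring the base layer is trivially bounded; the main engine — the layerwise doubling estimate — has already been established in Claim~\ref{lemmaclaim}, so the remaining work is essentially a clean union bound.
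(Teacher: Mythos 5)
Your proof is correct and follows essentially the same route as the paper: both arguments rest entirely on Claim~\ref{lemmaclaim} applied to each of the $L$ layers, accumulating a total failure probability of $L/n^4 \le 1/n^3$. The only difference is organizational --- you union-bound the bad events $\mathcal{B}_i$ up front and then run a deterministic induction on their complement, whereas the paper interleaves the induction with the probability estimates via the law of total probability; the two are equivalent.
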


\begin{proof}To prove Lemma \ref{lemma:main}, we need to show that, for $c=15L$, 
	\begin{equation*}
	\Pr[|\Tl{L}|> 2^L c \log{n}]<\frac{1}{n^3}.
	\end{equation*}
	We show that for $ 0\leq i \leq L, \Pr[|\Tl{i}| > 2^ic \log{n}]< \frac{i}{n^4}$, by induction.
	For the base of the induction, $|S_0| = 1$, and the claim holds.
	For the inductive step, assume that $\Pr[|\Tl{i}|> 2^i c\log{n}]< \frac{i}{n^4}$.
	Then
	\begin{align*}
	\Pr[|\Tl{i+1}|> 2^{i+1}c\log{n}] &= \Pr[|\Tl{i+1}|> 2^{i+1}c\log{n} : |\Tl{i}|>2^{i}c\log{n}]\Pr[|\Tl{i}|>2^{i}c\log{n}]\\
	& + \Pr[|\Tl{i+1}|> 2^{i+1}c\log{n} : |\Tl{i}|\leq2^{i}c\log{n}]\Pr[|\Tl{i}|\leq 2^{i}c\log{n}].
	\end{align*}
	From the inductive step and
	Claim~\ref{lemmaclaim}, using the union bound, the lemma follows. 
\end{proof}

Applying a union bound over all the vertices gives the size of \emph{each} query tree is  $O(\log{n})$ with probability at least $1-1/n^2$, completing the proof of Theorem~\ref{thm:main}.

\section{Discussion}
\label{sec:disc}

Query trees were introduced by Nguyen and Onak \cite{NO08}, where they bounded their expected size. Mansour et al. \cite{MRVX12}, studying query trees in the context of \emph{local computation algorithms} ~\cite{RTVX11} (see~\cite{centlocal2017} for a recent survey), showed that their size is at most $O(\log{n})$ w.h.p. The proof presented above is adapted from~\cite{RV16} - the proof is simpler and more elegant than that of~\cite{MRVX12}. Furthermore, in order to generate the random order required in the proof, it suffices to have a random function $f:V\rightarrow [L]$, where $L$ is a constant. This, combined with the fact the relevant set is of size at most $O(\log{n})$ w.h.p., allows us to use a random seed of length only $O(\log{n})$ to generate such an $f$. See~\cite{RV16, VardiPhD} for details.

\paragraph{Acknowledgments} 
We thank Guy Even for suggesting that a short note such as this might be informative and for his useful comments.

\bibliographystyle{plain}\bibliography{Vardi_PhD_Bibliography}

\end{document}